\spnewtheorem*{remarkx}{Remark} {\itshape}{\rmfamily}
\spnewtheorem{observation}[theorem]{Observation} {\bfseries}{\upshape}
\spnewtheorem{assumption}{Assumption} {\bfseries}{\itshape}
\spnewtheorem*{xmpl}{Example} {\itshape}{\rmfamily}
\def\clap#1{\hbox to 0pt{\hss#1\hss}}
 \def\mathrlap{\mathpalette\mathrlapinternal} 
\def\mathrlapinternal#1#2{\rlap{$\mathsurround=0pt#1{#2}$}}
\newcommand{\paths}{\mathcal{P}}
\newcommand{\pathsT}{\paths_T}
\newcommand{\pathsTw}{\paths_T^\ast}
\newcommand{\T}{\mathcal{T}}
\newcommand{\NP}{$N\!P$}
\newcommand{\arr}{\gamma}
\title{Abstract flows over time: A first step towards solving dynamic packing problems\thanks{This work was supported by Deutsche Forschungsgemeinschaft (DFG) as part of the Priority Program ``Algorithm Engineering'' (1307), by DFG Research Center \textsc{Matheon} ``Mathematics for key technologies'' in Berlin, and the Berlin Mathematical School. An extended abstract of this article will be published in \emph{Algorithms and Computation: 23rd International Symposium, ISAAC 2012}.}}
\author{Jan-Philipp W. Kappmeier \and Jannik Matuschke \and Britta Peis}
\institute{TU Berlin, Institut f\"{u}r Mathematik, Stra{\ss}e des 17. Juni 136, 10623 Berlin, Germany\\ \email{\{kappmeier,matuschke,peis\}@math.tu-berlin.de}\\
\vspace{0.5cm}
Preprint 001-2012 (revised version)\\ November 12, 2012\vspace{-0.5cm}}
\begin{document}

\maketitle

\begin{abstract}
\emph{Flows over time}~\cite{ford62} generalize classical network flows by introducing a notion of time. Each arc is equipped with a transit time that specifies how long flow takes to traverse it, while flow rates may vary over time within the given edge capacities. In this paper, we extend this concept of a dynamic optimization problem to the more general setting of \emph{abstract flows}~\cite{hoffman74}. In this model, the underlying network is replaced by an abstract system of linearly ordered sets, called ``paths'' satisfying a simple switching property: Whenever two paths $P$ and $Q$ intersect, there must be another path that is contained in the beginning of $P$ and the end of $Q$.
 
We show that a maximum abstract flow over time can be obtained by solving a weighted abstract flow problem and constructing a temporally repeated flow from its solution. In the course of the proof, we also show that the relatively modest switching property of abstract networks already captures many essential properties of classical networks.
\end{abstract}

\section{Introduction}

Time plays a crucial role in many applications of combinatorial optimization, e.g., in the context of transportation, communication, or productional planning. Therefore, extending classical problem formulations by a temporal dimension is of particular interest. So far the most prominent example in this direction is the concept of \emph{flows over time} -- also called ``dynamic flows'' in the literature -- which was first introduced and investigated by Ford and Fulkerson~\cite{ford62}. A key challenge in the context of flows over time is that an explicit specification of all flow values at each time step leads to an output that is exponential in the input size.
Ford and Fulkerson resolved this issue by showing that the maximum flow over time problem allows for a so-called \emph{temporally repeated} solution, which can be obtained by solving a single static flow problem. 
Since then, numerous results on different variants of flow over time problems have emerged. Outstanding results include \cite{fleischer1998,hoppe2000,klinz2004}, see~\cite{skutella09} for a general survey. 

Network flows can be interpreted as a special case of packing problems: we try to pack the capacitated edges of the graph by assigning flow values to the source-sink-paths. Given the impact of Ford and Fulkerson's result, which spawned a whole theory of flows over time, one now might ask how the concept of time can be extended to other packing problems. A first natural candidate are generalizations of static network flows, as, e.g., \emph{abstract flows}. The notion of abstract flows goes back to Hoffman~\cite{hoffman74}, who observed that Ford and Fulkerson's original proof of the max flow/min cut theorem~\cite{ford54} does not use the underlying network structure directly but only exploits one particular property of the path system, the so-called \emph{switching property}. Hoffman succeeded in showing that packing problems defined on general set systems (called \emph{abstract networks}) with this switching property are totally dual integral (TDI). These structural results were later complemented by the combinatorial primal-dual algorithms of Martens and McCormick~\cite{mccormick96,martens08}. Inspired by Hoffman's work, further abstractions based on uncrossing axioms have been proposed and corresponding TDI results established, e.g., lattice polyhedra~\cite{hoffman1978} or switchdec polyhedra~\cite{gaillard97},  see~\cite{schrijver1984} for a survey.
In light of these generalizations, abstract flows appear to serve as an ideal first stepstone in our endeavour towards dynamic formulations of more general packing integer programs.
\vspace{-0.5cm}
\subsubsection{Our contribution}
In this paper, we introduce and investigate \emph{abstract flows over time} and show how a temporally repeated abstract flow and a corresponding minimum cut can be computed by solving a single static weighted abstract flow problem. This immediately leads to the max flow/min cut theorem for abstract flows over time as our main result. Although our construction resembles that of Ford and Fulkerson's original result \cite{ford62} on (non-abstract) flows over time, the proof turns out to be considerably more involved and we will need to take a detour via a relaxed version of abstract flows over time that also considers storage of flow at intermediate elements. However, our results also imply that this relaxation is not proper and there always is an optimal solution that does not wait at intermediate nodes. In the course of our proof, we also establish some interesting structural properties of abstract networks, showing that the relatively modest switching property of abstract path systems already captures many essential properties of classical networks.

\vspace{-0.5cm}
\subsubsection{Structure of this paper} In the remainder of this section, we introduce Hoffman's model of abstract flows in detail. In~\autoref{sec:time-expansion}, we show how to conduct a time expansion on this model and point out differences to the time expanded network for classical network flows by Ford and Fulkerson~\cite{ford62}. In~\autoref{sec:temp-rep}, we will show how to construct the temporally repeated abstract flow and a corresponding minimum abstract cut of same value. In order to validate feasibility of this cut, we will prove the necessary properties on the structure of abstract networks in~\autoref{sec:structure}. Using these results, we can finally show in~\autoref{sec:final-proof} that the cut actually intersects all temporal paths, completing the proof of our main theorem.

\subsection*{Introduction to abstract flows}

An \emph{abstract path system} consists of a ground set $E$ of \emph{elements} and a family of \emph{paths} $\paths \subseteq 2^E$. For every $P \in \mathcal{P}$ there is an order $<_P$ of the elements in $P$. A path system is an \emph{abstract network}, if the \emph{switching property} is fulfilled: For every $P, Q \in \paths$ and every $e \in P \cap Q$, there is a path 
$$P \times_e Q \subseteq \{p \in P :~ p \leq_P e\} \cup \{q \in Q :~ q \geq_Q e\}.$$

Given an abstract network with capa\-cities $c \in \mathbb{R}_+^E$ for all elements, the \emph{maximum abstract flow problem} asks for an assignment of flow values $x \in \mathbb{R}_+^\paths$ to the paths such as to maximize the total flow value while not violating the capacity of any element. The problem can be generalized further by introducing a weight function $r \in \mathbb{R}_+^\paths$ that specifies the ``reward'' per unit of flow sent along each path. It is easy to see that allowing general weight functions renders the problem \NP-hard. Thus, the choice of weight functions is restricted to supermodular functions, i.e., we require
$$r(P \times_e Q) + r(Q \times_e P) \geq r(P) + r(Q)$$
for every $P, Q \in \paths$ and $e \in P \cap Q$. 

The dual of the maximum weighted abstract flow problem is the \emph{minimum weighted abstract cut problem}, which assigns a value $y(e)$ to every element $e \in E$ so as to cover every path according to its weight.
The two problems can be stated as follows.

\begin{minipage}{3.5cm}
   \begin{alignat*}{3}
    \max\ \ && \sum_{P \in \paths} r(P) x(P) & & \\
    \text{s.t.}\ \ && \sum_{P \in \paths : e \in P} x(P) \leq &\ c(e) & \ \ \forall e \in E \\
    && x(P) \geq &\ 0 & \ \ \forall P \in \paths
    \end{alignat*}
\end{minipage}
\hspace{0.8cm}
\begin{minipage}{3.5cm}
\begin{alignat*}{3}
 \min\ \ && \sum_{e \in E} c(e) y(e) & & \\
 \text{s.t.}\ \ && \sum_{e \in P} y(e) \geq &\ r(P) & \ \ \forall P \in \paths \\
 && y(e) \geq&\ 0 & \ \ \forall e \in E
\end{alignat*}
\end{minipage}
\vspace{0.3cm}

Hoffman~\cite{hoffman74} showed that for every integral supermodular weight function, the abstract cut LP is totally dual integral. This implies a generalized version of Ford and Fulkerson's max flow/min cut result in two ways: On the one hand, the switching property represents a significant abstraction, allowing for more general structures. On the other hand, supermodular weight functions lead to weighted cuts, i.e., elements can appear multiple times in the cut. We will later see a useful example of such weights in the context of temporally repeated flows, which also yields an intuitive interpretation of these cut values.

Hoffman's structural result was extended by McCormick~\cite{mccormick96}, who presented a combinatorial algorithm that solves the unweighted version ($r \equiv 1$) of the maximum abstract flow problem in time polynomial in $|E|$, if the abstract network is given by a separation oracle for the abstract cut LP (in the unweighted case, this is equivalent to deciding whether a given set of elements contains a path or not). Later, Martens and McCormick~\cite{martens08} extended this result and presented an algorithm that also solves the weighted case. 

While these results indicate that the switching property is the essential force behind max flow/min cut and similar total dual integrality results for flow based problems, we want to close this section by pointing out an example that shows how abstract networks actually may differ from classical networks. In classical networks, if two paths $P$ and $Q$ both intersect a third path $R$, then there either is a path from the beginning of $P$ to the end of $Q$ or the other way around. The following example shows that this is not true in abstract networks, even in cases where the switching property preserves the order of intersecting abstract paths.

\begin{xmpl}\label{ex:abstract-network}
Consider the abstract network $(E, \paths)$ with $E = \{1, 2, 3, 4, a, b, c, d\}$ and $\paths = \{(1, 2, 3, 4), (a, 2, c), (b, 3, d), (1, c), (1, d), (a, 4), (b, 4)\}$. Although both $(a, 2, c)$ and $(b, 3, d)$ intersect the path $(1, 2, 3, 4)$, there is neither a path that starts with $a$ and ends with $d$ nor one that starts with $b$ and ends with $c$.
\end{xmpl}

\section{Time expansion of abstract networks}\label{sec:time-expansion}

Time plays an important role in many application areas of network flows. Flow rates can vary over time, and flow also takes time to travel within the network. One concept to capture these temporal effects is the so-called time expanded network introduced by Ford and Fulkerson~\cite{ford62}. The basic idea is to introduce multiple copies of the nodes in the network, one for each point in time. Then arcs connect copies of vertices according to their travel time.
We extend this concept to the world of abstract flows by introducing the time expansion of an abstract network. In the spirit of Ford and Fulkerson's idea, we will introduce multiple copies of the abstract network. In contrast to the classical case however, not copies of individual arcs but of whole paths will be introduced.

The \emph{time expansion} of an abstract network consists of a (static) abstract network with capacities $c \in \mathbb{R}_+^E$, \emph{transit times} $\tau \in \mathbb{Z}_+^E$ and a \emph{time horizon} $T \in \mathbb{Z}_+$. 	The time from $0$ to $T$ is discretized into $T$ intervals $[0, 1), \ldots, [T-1, T)$ which we identify with the set of their starting times $\T := \{0, \ldots, T-1\}$. For each interval, a copy of the ground set $E$ is introduced, i.e., the time expanded ground set is
$E_T := E \times \T$.

A \emph{temporal path} is denoted by $P_t$, where $P$ is a path of the underlying static abstract network and $t \in \T$ specifies the starting time of the path. Flow sent along the temporal path $P_t$ enters element $e$ at time $t + \sum_{p \in (P, e)} \tau(e)$,
which is the time it needs for traversing all preceeding elements plus the initial offset of the path. Accordingly, we identify $P_t$ with the set of its temporal elemtents by defining 
$$\textstyle P_t := \left\{(e, \theta) \in E_T :~ e \in P,~ \theta = t + \sum_{p \in (P, e)} \tau(p) \right\}.$$
The \emph{arrival time} of the temporal path $P_t$ is $t+ \sum_{e \in P} \tau(e)$, i.e., the time at which the flow arrives the end of the path. Since all flow is supposed to arrive its destination within the time horizon, we only allow copies of paths with a maximum arrival time of $T-1$, which is the final element of $\mathcal{T}$. Thus, the set of temporal paths is defined by
$$\textstyle \paths_T := \left\{P_t :~ P \in \paths,\;t \in \T,~ t + \sum_{p \in P} \tau(p) < T \right\}.$$

We now can define the maximum abstract flow over time problem in analogy to the (static) maximum abstract flow problem. An \emph{abstract flow over time} is an assignment $x: \pathsT \rightarrow \mathbb{R}_+$ of non-negative flow values to all temporal paths. It is feasible if and only if the capacity of every element at every point in time is respected. 
The maximum abstract flow over time problem asks for an abstract flow over time that maximizes the total value of the flow:
   \begin{alignat*}{3}
    \max\quad && \sum_{P_t \in \pathsT} x(P_t) & & \\
    \text{s.t.}\quad && \sum_{\mathrlap{P_t \in \pathsT :~ (e, \theta) \in P_t}}\ x(P_t) & \leq c(e) & \quad \forall e \in E,~ \theta \in \T \\
    && x(P_t) & \geq 0 & \quad \forall P_t \in \pathsT.
    \end{alignat*}

In analogy to the static case, the maximum value of an abstract flow over time can be bounded by an \emph{abstract cut over time}, i.e., a subset $C \subseteq E_T$ of the time expanded ground set such that for each $P_t \in \pathsT$ the set $P_t \cap C$ is nonempty.
	
\newcounter{count-lem:cut-is-ub}
\addtocounter{count-lem:cut-is-ub}{\arabic{theorem}}
\begin{lemma}\label{lem:cut-is-ub}
Let $x$ be an abstract flow over time and let $C$ be an abstract cut over time. Then
$\sum_{P_t \in \pathsT}x(P_t) \leq \sum_{(e, \theta) \in C} c(e)$.
\end{lemma}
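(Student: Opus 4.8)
The plan is to mimic the classical weak-duality argument between flows and cuts, now in the time-expanded setting. The only two facts I will use are: (i) $C$ is an abstract cut over time, so $P_t \cap C \neq \emptyset$ for every $P_t \in \pathsT$; and (ii) $x$ is feasible, so $\sum_{P_t : (e,\theta)\in P_t} x(P_t) \leq c(e)$ for every $(e,\theta) \in E_T$.

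First I would bound the objective from above by a double sum over path–cut incidences: since $x(P_t) \geq 0$ and $|P_t \cap C| \geq 1$ by (i), we have
$\sum_{P_t \in \pathsT} x(P_t) \leq \sum_{P_t \in \pathsT} |P_t \cap C|\, x(P_t) = \sum_{P_t \in \pathsT}\ \sum_{(e,\theta)\in P_t \cap C} x(P_t).$
Next I would interchange the order of summation, grouping by the cut element rather than by the path:
$\sum_{P_t \in \pathsT}\ \sum_{(e,\theta)\in P_t \cap C} x(P_t) = \sum_{(e,\theta)\in C}\ \sum_{P_t \in \pathsT :~ (e,\theta)\in P_t} x(P_t).$
Finally, applying the capacity constraint (ii) to each inner sum gives $\sum_{(e,\theta)\in C} \sum_{P_t :~ (e,\theta)\in P_t} x(P_t) \leq \sum_{(e,\theta)\in C} c(e)$, which is the claimed bound.

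I do not expect any real obstacle here; this is a routine counting argument and the lemma is essentially the ``easy direction'' of the max-flow/min-cut duality for abstract flows over time. The only point deserving a word of care is that the interchange of summations is legitimate because both sums are finite (the time horizon $T$ is finite and $\paths$ is a finite family, so $\pathsT$ and $E_T$ are finite) and all summands are nonnegative. The nontrivial content — exhibiting a cut $C$ whose bound is attained — is deferred to the later sections, as announced in the introduction.
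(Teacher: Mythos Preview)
Your proof is correct and follows essentially the same approach as the paper's own proof: bound each $x(P_t)$ using that $P_t$ meets $C$, swap the order of summation, and apply the capacity constraints termwise. The paper compresses these three steps into a single chain of inequalities, but the argument is identical.
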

\begin{proof}
As the cut contains an element of every temporal path and the capacity constraints are respected at every point in time, we get
$$\sum_{P_t \in \pathsT} x(P_t) \leq \sum_{(e, \theta) \in C} \sum_{\mathrlap{\ \ \ \ P_t: (e, \theta) \in P_t}} x(P_t) \leq \sum_{e \in C} c(e).\hspace{1cm}$$
\vspace{-1.3cm}\\ 
\qed
\vspace{0.5cm}
\end{proof}

\begin{remarkx}(Time expansion of an abstract network vs. time expanded network)
While the time expansion of abstract networks as defined above is similar to the notion of a \emph{time expanded network} as defined by Ford and Fulkerson~\cite{ford62} for classical network flows, the two definitions are not quite identical.
Time expanded networks are based on the arc formulation of network flows. They are constructed by introducing copies of both the nodes and arcs of the underlying static network and adjusting the end points of the arcs according to their transit times. By construction, the resulting structure is guaranteed to be a network again. Unfortunately, there is no correspondence to the arc formulation for abstract flows -- their definition is inherently tied to the path system, which does not allow for local concepts such as flow conservation at a particular element. Our model of time expansion therefore introduces copies of each path as a whole. In contrast to time expanded networks, the time expansion of an abstract network is \emph{not} an abstract network in general, as can be seen in the following example.
\end{remarkx}

\begin{xmpl}\label{ex:switch_violated}
Let $E = \{s, a, b, t\}$ and $\paths = \{P, Q, R, S\}$ with $P = (s, a, b, t)$, $Q = (s, b, a, t)$, $R = (s, a, t)$, and $S = (a, b, t)$. It is easy to verify that $\paths$ in fact fulfills the switching property. Now assume all elements have unit transit times, i.e., $\tau \equiv 1$. The temporal paths $P_0$ and $Q_1$ intersect in the element $(b, 2)$. However, there is no temporal path in $\pathsT$ that can be constructed from the elements $\{(s, 1), (b, 2), (t, 4)\}$, as there is a ``time gap'' between $(b, 2)$ and $(t, 4)$. Thus, the time expansion violates the switching property.
\end{xmpl}
In view of this example, it is not even clear  whether max flow/min cut results are still valid in the context of abstract flows over time or how far existing algorithms for abstract flow problems can be applied to the time expansion of the abstract network. Fortunately, the proof of our main result in the following sections will dissipate these concerns. 

\begin{theorem}[Abstract max flow/min cut over time]\label{thm:max-flow/min-cut}
The value of a maximum abstract flow over time equals the capacity of a minimum abstract cut over time. Both a maximum flow and a minimum cut over time can be computed by solving a single (static) maximum weighted abstract flow problem.
\end{theorem}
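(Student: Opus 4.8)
\medskip\noindent\emph{Proof overview.}\ The plan is to mimic Ford and Fulkerson's temporally repeated flows, with their static maximum flow replaced by a static \emph{weighted} abstract flow. I would equip each path $P\in\paths$ with the reward $r(P):=T-\sum_{e\in P}\tau(e)$, which counts the starting times $t$ for which $P_t\in\pathsT$ whenever this count is positive. The first step is to verify that $r$ is supermodular: the switching property yields $P\times_e Q\subseteq\{p\in P:p\le_P e\}\cup\{q\in Q:q\ge_Q e\}$ and, symmetrically, $Q\times_e P\subseteq\{q\in Q:q\le_Q e\}\cup\{p\in P:p\ge_P e\}$, and since these two sets together double count only the element $e$, we obtain $\sum_{f\in P\times_e Q}\tau(f)+\sum_{f\in Q\times_e P}\tau(f)\le\sum_{f\in P}\tau(f)+\sum_{f\in Q}\tau(f)$, which is exactly $r(P\times_e Q)+r(Q\times_e P)\ge r(P)+r(Q)$. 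LP duality then provides optimal solutions $x$ of the static weighted abstract flow problem and $y$ of its dual with $\sum_P r(P)x(P)=\sum_e c(e)y(e)$, and, as $r$ is an integral supermodular weight, Hoffman's total dual integrality result lets me choose $y\in\mathbb{Z}_+^E$; a single run of the primal-dual algorithm for weighted abstract flow produces $x$ and $y$ together, which also yields the algorithmic part of the statement.

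For the lower bound I would turn $x$ into the \emph{temporally repeated} abstract flow over time that sends $x(P)$ units along $P_t$ for every $t$ with $P_t\in\pathsT$. This is feasible, because for a fixed path $P$ and a fixed $(e,\theta)\in E_T$ there is at most one $t$ with $(e,\theta)\in P_t$, so the flow through $(e,\theta)$ is at most $\sum_{P:e\in P}x(P)\le c(e)$; and its value is $\sum_P r(P)x(P)$, the static optimum. By Lemma~\ref{lem:cut-is-ub} it then remains to exhibit an abstract cut over time of capacity at most $\sum_e c(e)y(e)$, and this matching upper bound is where the real work lies.

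Writing $w(e):=y(e)+\tau(e)$, dual feasibility reads $\sum_{e\in P}w(e)\ge T$ for every $P$. I would block each element $e$ during a half-open window of length $y(e)$ that, viewed from a path $P$ through $e$, begins at the $w$-prefix-length $\mu_P(e):=\sum_{p<_P e}w(p)$: for a single fixed path, these windows, shifted by the accumulated transit times, intercept every copy $P_t\in\pathsT$, since their total length $\sum_{e\in P}y(e)\ge T-\sum_{e\in P}\tau(e)$ is at least the number of admissible starting times. The obstacle is that $\mu_P(e)$ a priori depends on $P$. On the flow-carrying paths, which are tight ($\sum_{f\in P}w(f)=T$) by complementary slackness, a switching-property argument removes this dependence: if $P$ and $Q$ are tight and both contain $e$, then $P\times_e Q\subseteq\{p\le_P e\}\cup\{q\ge_Q e\}$ gives $w(P\times_e Q)\le\mu_P(e)+T-\mu_Q(e)$, while dual feasibility of $P\times_e Q$ forces $w(P\times_e Q)\ge T$, so $\mu_P(e)\ge\mu_Q(e)$ and, by symmetry, $\mu_P(e)=\mu_Q(e)$. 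Then $\mu(e)$ is well defined on every element the cut uses, and the cut $C:=\{(e,\theta)\in E_T:\mu(e)\le\theta<\mu(e)+y(e)\}$ has capacity at most $\sum_e y(e)c(e)$. The genuinely hard part is to show that $C$ nevertheless meets \emph{every} temporal path, not merely the flow-carrying ones; once that is done, Lemma~\ref{lem:cut-is-ub} and the lower bound sandwich all the quantities, so $C$ is a minimum cut over time and the theorem follows. Proving that $C$ is a cut requires understanding how an arbitrary abstract path interleaves with the tight paths, which is more delicate than in classical networks, where two paths crossing a common path are always comparable but here (as the example in the introduction shows) need not be; this is the purpose of the structural lemmas of \autoref{sec:structure} and the interception argument of \autoref{sec:final-proof}. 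To make that argument manageable I would detour through the relaxed model in which flow may be stored at intermediate elements: its cuts must additionally intercept the ``waiting'' variants of temporal paths, yet this relaxation turns out to be the natural setting in which the interception argument goes through. Establishing that max flow equals min cut in the relaxed model and then noting that the temporally repeated non-waiting flow already attains the bound both completes the proof and shows that waiting at intermediate elements is never necessary.
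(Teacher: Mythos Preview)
Your proposal is correct and follows the paper's route: supermodularity of $r$, the temporally repeated flow from a static weighted optimum, a cut that blocks each element $e$ for $\tilde y(e)$ consecutive time steps, and the deferral of the hard coverage argument to the structural lemmas of \autoref{sec:structure} and the interception proof of \autoref{sec:final-proof} via the waiting relaxation.

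The one point of difference is how you fix the start of each blocking window. You define $\mu(e)$ via \emph{tight} (flow-carrying) paths and prove consistency by a switching argument, whereas the paper sets $\alpha(e):=\min_{P\ni e}\sum_{p<_P e}(\tau(p)+\tilde y(p))$ as a minimum over \emph{all} paths through $e$. Your own inequality in fact shows the two agree: applying your bound with $Q$ tight and $P$ arbitrary gives $\mu_P(e)\ge\mu_Q(e)$, so the minimum is attained on tight paths and $\alpha(e)=\mu(e)$ whenever $e$ lies on one; hence the cuts coincide. Be aware, however, that the interception argument in \autoref{sec:final-proof} reasons about $\alpha(\bar e)$ also at elements $\bar e$ with $\tilde y(\bar e)=0$ that need not lie on any tight path, so when you actually carry out that step you will want the paper's global definition rather than your tight-path one.
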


Our proof of \autoref{thm:max-flow/min-cut} involves constructing an abstract cut over time. In order to show feasibility of this cut, we will have to introduce the possibility of waiting at intermediate elements as an important device in our proof (see \autoref{sec:structure}). 
Storage of flow at intermediate nodes plays an interesting role in the field of flows over time: While in some settings, such as the maximum flow over time problem or the NP-hard minimum cost flow over time problem, there always exist optimal solutions that do not wait at intermediate nodes~\cite{ford62,fleischer2003}, this is not true in other settings: e.g., for multi-commodity flows over time, the decision of allowing flow storage at intermediate nodes has an influence on the value of the solution and also on the complexity~\cite{hall2007,gross2012}.
In the context of abstract flows over time, our results imply that the possibility of waiting has no influence on the problem, as we prove in \autoref{sec:final-proof} that the temporally repeated solution constructed in \autoref{sec:temp-rep} is optimal even if waiting is allowed.

\begin{theorem}\label{thm:max-flow/min-cut_waiting}
	If waiting at intermediate elements is allowed, there still is a maximum abstract flow over time that does not wait at intermediate elements.
\end{theorem}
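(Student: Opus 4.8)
The plan is to bound the optimal value of the relaxed problem -- abstract flows over time with waiting permitted -- both from below and from above by the optimal value $\mathrm{OPT}$ of the maximum abstract flow over time problem, and so to force equality. Write $\mathrm{OPT}^{\mathrm{w}}$ for the optimum of the relaxed problem. Every abstract flow over time is in particular a feasible solution of the relaxed problem, so $\mathrm{OPT} \leq \mathrm{OPT}^{\mathrm{w}}$; in fact the temporally repeated abstract flow constructed in \autoref{sec:temp-rep} is a \emph{waitless} solution attaining $\mathrm{OPT}$, and it is precisely this solution that we aim to certify as optimal for the relaxed problem as well.

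For the reverse inequality I would first pin down the relaxation. Let $\pathsTw$ denote the set of \emph{waiting temporal paths}: such a path arises from a static path $P = (e_1,\dots,e_k) \in \paths$, a start time $t \in \T$, and non-negative integer delays $w_1,\dots,w_k$, where flow enters $e_i$ at time $t + \sum_{j<i}\bigl(\tau(e_j)+w_j\bigr)$ and the arrival time $t + \sum_{j\leq k}\bigl(\tau(e_j)+w_j\bigr)$ is still required to be less than $T$; as in \autoref{sec:structure} we identify each such path with the finite set of temporal elements it occupies. Say that $C \subseteq E_T$ is a \emph{waiting cut over time} if $C$ meets every path in $\pathsTw$. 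The proof of \autoref{lem:cut-is-ub} uses only that $C$ contains a temporal element of every path carrying flow and that the capacity constraints hold at every point in time, so it applies verbatim and yields $\mathrm{OPT}^{\mathrm{w}} \leq \sum_{(e,\theta)\in C} c(e)$ for every waiting cut over time $C$.

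It remains to show that the minimum abstract cut over time $C$ produced in \autoref{sec:temp-rep} -- which by \autoref{thm:max-flow/min-cut} has capacity exactly $\mathrm{OPT}$ -- is in fact a waiting cut over time, i.e., it already blocks every temporal path that is allowed to idle at intermediate elements, and not merely every waitless one. This is exactly the statement established in \autoref{sec:final-proof} using the structural properties of abstract networks developed in \autoref{sec:structure} (and, specialized to waitless paths, it is also what is needed for \autoref{thm:max-flow/min-cut} itself). Granting it, $\mathrm{OPT}^{\mathrm{w}} \leq \sum_{(e,\theta)\in C} c(e) = \mathrm{OPT} \leq \mathrm{OPT}^{\mathrm{w}}$, so all three quantities coincide and the waitless temporally repeated flow attains $\mathrm{OPT}^{\mathrm{w}}$, which is the claim.

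I expect the last step to be the real obstacle, and it is the reason for the detour through the relaxed model. Allowing waiting genuinely enlarges the path family -- it is precisely this enlargement that ``repairs'' the failure of the switching property exhibited in \autoref{ex:switch_violated} -- so a set of temporal elements that blocks every waitless temporal path might a priori be evaded by a path that idles at an intermediate element in order to land on a later, unblocked time layer. Excluding this behaviour is where the analysis of how abstract paths and cuts interact, carried out in \autoref{sec:structure} and \autoref{sec:final-proof}, does the decisive work; the sandwich argument above then merely turns that structural fact into the non-properness of the relaxation.
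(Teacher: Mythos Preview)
Your proposal is correct and follows exactly the paper's approach: the temporally repeated flow $x^\ast$ is waitless, the cut $C$ has capacity equal to the value of $x^\ast$ by LP duality, and \autoref{sec:final-proof} shows $C$ intersects every path in $\pathsTw$, so the sandwich $\mathrm{OPT}\le\mathrm{OPT}^{\mathrm{w}}\le\sum_{(e,\theta)\in C}c(e)=\mathrm{OPT}$ closes. One small expository point: invoking \autoref{thm:max-flow/min-cut} to assert $\sum_{(e,\theta)\in C}c(e)=\mathrm{OPT}$ is slightly circular, since the paper derives both theorems simultaneously from the same ingredients; cleaner is to cite the LP-duality identity $\sum_{(e,\theta)\in C}c(e)=\sum_{P_t}x^\ast(P_t)$ directly (the computation in the proof of \autoref{count-cor:max-flow/min-cut}), which needs no feasibility of $C$, and then let the covering of $\pathsTw$ do all the work.
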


\section{Constructing a maximum abstract flow over time}\label{sec:temp-rep}

The number of paths created by applying the time expansion is linear in $T$ and thus exponential in the size of the input. Hence, even encoding a solution in the straight-forward way results in an exponentially sized output. Ford and Fulkerson~\cite{ford62} resolved this problem for the classical (non-abstract) flow over time problem by introducing so-called temporally repeated flows, i.e., a flow over time constructed by temporally repeating a static flow pattern. 

A \emph{temporally repeated abstract flow} is an abstract flow over time $x$ that is constructed from a static abstract flow $\tilde{x}$ by setting
$x(P_t) := \tilde{x}(P)$ for $P \in \paths \text{ and } 0 \leq t < T - \sum_{e \in P} \tau(e)$ and $0$ otherwise.
In other words, the static flow on each path is repeatedly sent as long as possible before the time horizon is reached. It is easy to check that feasibility of the underlying static flow implies feasibility of the temporally repeated flow.

\newcounter{count-lem:temp_rep_feas}
\addtocounter{count-lem:temp_rep_feas}{\arabic{theorem}}
\begin{lemma}\label{lem:temp_rep_feas}
A temporally repeated abstract flow derived from a feasible abstract flow is a feasible abstract flow over time.
\end{lemma}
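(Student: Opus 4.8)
The plan is to verify the capacity constraint of the abstract flow over time directly, at each element and each time step, using only feasibility of the underlying static flow $\tilde x$, i.e.\ $\sum_{P \in \paths : e \in P} \tilde x(P) \le c(e)$ for all $e \in E$. So fix an element $e \in E$ and a time step $\theta \in \T$; the goal is to bound $\sum_{P_t \in \pathsT :\, (e,\theta) \in P_t} x(P_t)$ by $c(e)$.

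The key observation is that the time expansion does not create any additional congestion at a single point in time: each copy of a static path uses $e$ at exactly one time. Concretely, for a fixed static path $P$ with $e \in P$, the requirement $(e,\theta) \in P_t$ forces $\theta = t + \sum_{p \in (P,e)} \tau(p)$ by definition of $P_t$, which determines the start time $t$ uniquely; call it $t_P$. Hence, among all temporal paths $P_t$ obtained from $P$ by varying the start time, at most one — namely $P_{t_P}$ — contributes to the sum at $(e,\theta)$. Moreover, by the definition of the temporally repeated flow, $x(P_{t_P})$ equals either $\tilde x(P)$ (when $0 \le t_P < T - \sum_{e' \in P}\tau(e')$) or $0$, so in both cases $x(P_{t_P}) \le \tilde x(P)$.

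Grouping the temporal paths that contain $(e,\theta)$ by their underlying static path and applying this observation then gives
$$\sum_{P_t \in \pathsT :\, (e,\theta) \in P_t} x(P_t) \;\le\; \sum_{P \in \paths :\, e \in P} \tilde x(P) \;\le\; c(e),$$
where the last inequality is feasibility of $\tilde x$. Since $e$ and $\theta$ were arbitrary and $x \ge 0$ holds by construction, $x$ is a feasible abstract flow over time. The argument is essentially bookkeeping; the only point that needs (minor) care is the claim that a pair $(P, e)$ with $e \in P$ pins down the start time of every temporal path derived from $P$ that contains $(e,\theta)$, and this is immediate from the definition $P_t = \{(e,\theta) \in E_T : e \in P,\ \theta = t + \sum_{p \in (P,e)}\tau(p)\}$. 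I do not anticipate any genuine obstacle here.
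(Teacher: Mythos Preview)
Your proof is correct and follows essentially the same approach as the paper: fix $(e,\theta)$, observe that for each static path $P$ with $e\in P$ there is at most one start time $t$ with $(e,\theta)\in P_t$, and bound the resulting sum by the static capacity constraint. If anything, your version is slightly more careful in restricting the final sum to paths containing $e$ and in noting $x(P_{t_P})\le \tilde x(P)$ explicitly.
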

\begin{proof}
Let $\tilde{x}$ be a feasible abstract flow and let $x$ be the corresponding temporally repeated flow. We only need to verify that $x$ obeys the capacity restrictions for every $e \in E$ and every $\theta \in \T$. In fact, observe that $(e, \theta) \in P_t$ if and only if $e \in P$ and $\theta = t + \sum_{p \in P} \tau(p)$. As the second part of this sum is constant for a fixed $P \in \paths$, there is only one specific value of $t$ for which $(e, \theta) \in P_t$. Thus
$$\sum_{\mathrlap{P_t \in \pathsT:~(e, \theta) \in P_t}} x(P_t) \leq \sum_{P \in \paths} \tilde{x}(P) \leq c(e)$$
for all $(e, \theta) \in E_T$.\qed
\end{proof}

In order to construct a maximum temporally repeated abstract flow, we first observe that flow can be sent along path $P \in \paths$ up to time $r(P) := T - \sum_{e \in P} \tau(e)$, i.e., the flow value $\tilde{x}(P)$ is repeated $r(P)$ times. Thus, the total flow value of the temporally repeated flow $x$ resulting from the static flow $\tilde{x}$ is $\sum_{P \in \paths} r(P)\tilde{x}(P)$ and a maximum temporally repeated flow corresponds to a static abstract flow that is maximum with respect to the weights $r(P)$.
It is not hard to see that the weight function defined in this way is supermodular.
\newcounter{count-obs:supermodular}
\addtocounter{count-obs:supermodular}{\arabic{theorem}}
\begin{observation}The weight function $r(P) := T - \sum_{e \in P} \tau(e)$ is supermodular.
\end{observation}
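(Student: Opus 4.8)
The plan is to verify the inequality $r(P \times_e Q) + r(Q \times_e P) \geq r(P) + r(Q)$ directly from the definition, using only the switching property and the non-negativity of the transit times $\tau \in \mathbb{Z}_+^E$. The point is that $r$ depends on a path only through its total transit time $\sum_{f \in P}\tau(f)$, that this quantity splits additively at any element $e \in P$, and that the switching property lets us bound the total transit time of a switched path by that of the relevant prefix of $P$ plus the relevant suffix of $Q$.

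Concretely, I would fix $P, Q \in \paths$ and $e \in P \cap Q$, set $R := P \times_e Q$ and $R' := Q \times_e P$, and bound $\sum_{f \in R}\tau(f)$ and $\sum_{f \in R'}\tau(f)$ from above. By the switching property, $R \subseteq \{p \in P : p \leq_P e\} \cup \{q \in Q : q \geq_Q e\}$; since $e$ already lies in the first set, this union equals $\{p \in P : p \leq_P e\} \cup \{q \in Q : q >_Q e\}$, and $\tau \geq 0$ gives $\sum_{f \in R}\tau(f) \leq \sum_{p \in P :\, p \leq_P e}\tau(p) + \sum_{q \in Q :\, q >_Q e}\tau(q)$. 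By symmetry, $\sum_{f \in R'}\tau(f) \leq \sum_{q \in Q :\, q \leq_Q e}\tau(q) + \sum_{p \in P :\, p >_P e}\tau(p)$.

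Summing the two identities $r(R) + r(R') = 2T - \sum_{f \in R}\tau(f) - \sum_{f \in R'}\tau(f)$ and substituting these bounds, the contribution of $P$ becomes $\sum_{p \in P:\, p \leq_P e}\tau(p) + \sum_{p \in P:\, p >_P e}\tau(p)$, which is a partition of $P$ and hence equals $\sum_{p \in P}\tau(p)$; likewise the $Q$-contribution collapses to $\sum_{q \in Q}\tau(q)$. Therefore $r(R) + r(R') \geq 2T - \sum_{p \in P}\tau(p) - \sum_{q \in Q}\tau(q) = r(P) + r(Q)$, as desired.

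The one point I would be careful about — and essentially the only place the argument could go wrong — is the treatment of the element $e$ itself. If one bounds both $R$ and $R'$ using the closed sets $\{\cdot \leq_P e\}$ and $\{\cdot \geq_Q e\}$, then $\tau(e)$ is charged twice and the estimate only yields $r(R)+r(R') \geq r(P)+r(Q) - 2\tau(e)$. Splitting consistently — closed on the prefix side, open on the suffix side for both $R$ and $R'$ — ensures each element of $P$ and of $Q$ is counted exactly once, so the ``identity'' part of the bound is tight and the inequality closes.
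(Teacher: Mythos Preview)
Your argument is correct and is essentially identical to the paper's own proof: both expand $r(P\times_e Q)+r(Q\times_e P)=2T-\sum_{f\in P\times_e Q}\tau(f)-\sum_{f\in Q\times_e P}\tau(f)$, bound each switched path by the closed prefix of one path together with the open suffix of the other via the switching property and $\tau\geq 0$, and observe that these four pieces partition $P$ and $Q$. Your care about counting $\tau(e)$ only once is exactly the point encoded in the paper's use of $[P,e]\cup(e,Q]$ versus $[Q,e]\cup(e,P]$.
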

\begin{proof} By definition of $r$ we have
\[\begin{aligned}
	r(P \times_e Q) + r(Q \times_e P) =\ & T - \sum\limits_{\mathrlap{e \in P \times_e Q}} \tau(e) + T - \sum\limits_{\mathrlap{e \in Q \times_e P}} \tau(e) \\
	\geq\ & 2T - \left( \sum\limits_{\mathrlap{\ \ e \in [P,e]}} \tau(e) + \sum\limits_{\mathrlap{e \in (e,Q]}} \tau(e) \right) - \left( \sum\limits_{\mathrlap{\ \ e\in[Q,e]}} \tau(e) + \sum\limits_{\mathrlap{e\in(e,P]}} \tau(e) \right) \\ 
	=\ & 2T - \sum\limits_{e\in P} \tau(e) - \sum\limits_{e\in Q} \tau(e) \\
	=\ & r(P) + r(Q).\text{\hspace{5.8cm}}\qed
\end{aligned}\]
\end{proof}

Thus, we can solve the weighted abstract flow problem defined by these weights using the algorithm from~\cite{martens08}, yielding a (static) abstract flow $\tilde{x}^\ast$ of maximum weight and the corresponding temporally repeated flow $x^\ast$.
We will show that the value of $x^\ast$ is not only maximum among the temporally repeated abstract flows but also among \emph{all} abstract flows over time.
To this end, we now construct an abstract cut over time whose capacity matches the flow value of $x^\ast$. Let $\tilde{y}$ be an optimal solution to the dual of the static weighted abstract flow problem with the weights $r(P)$ used to construct the temporally repeated flow. Note that by~\cite{hoffman74}, we can assume $\tilde{y}$ to be integral. We will interpret the values $\tilde{y}(e)$ as the number of time steps for which element $e$ is contained in the cut. We define the time at which $e \in E$ enters the cut by setting 
$$\textstyle \alpha(e) := \min_{P \in \paths} \sum_{p \in (P, e)} (\tau(p) + \tilde{y}(p))$$
and define 
$$C := \left\{(e, \theta) \in E_T:~ \alpha(e) \leq \theta < \alpha(e) + \tilde{y}(e) \right\}.$$

\begin{theorem}\label{lem:cut-over-time}
$C$ is a feasible abstract cut over time.
\end{theorem}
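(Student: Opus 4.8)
We must show that every temporal path $P_t \in \pathsT$ contains at least one element of $C$. The natural approach is to fix an arbitrary $P_t$ and track, as we walk along $P$ in its order $<_P$, the two competing "clocks": the physical arrival time $\theta_e = t + \sum_{p \in (P,e)} \tau(p)$ at which flow reaches element $e$, and the cut-entry time $\alpha(e) = \min_{Q \in \paths}\sum_{q \in (Q,e)}(\tau(q)+\tilde y(q))$. Element $e$ on $P_t$ is hit by $C$ precisely when $\alpha(e) \le \theta_e < \alpha(e) + \tilde y(e)$. I would argue by contradiction: suppose $P_t$ misses $C$ entirely, so for every $e \in P$ either $\theta_e < \alpha(e)$ (flow arrives before the cut opens) or $\theta_e \ge \alpha(e) + \tilde y(e)$ (flow arrives after it closes).

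\textbf{The telescoping/exchange argument.} The key quantity to monitor is the "slack" $\delta_e := \theta_e - \sum_{p \in (P,e)}(\tau(p) + \tilde y(p)) = t - \sum_{p \in (P,e)} \tilde y(p)$, which is non-increasing along $P$ and starts at $t \ge 0$. Since $\alpha(e) \le \sum_{p\in(P,e)}(\tau(p)+\tilde y(p))$ by definition of the min, we always have $\theta_e \ge \alpha(e) + \delta_e$; hence if the flow ever "arrives early" ($\theta_e < \alpha(e)$) we need $\delta_e < 0$. I would show that at the \emph{first} element $e$ of $P$ where the flow is not "late" (such an element exists because at the last element $e^\ast$ of $P$ we have, using the dual feasibility $\sum_{e\in P}\tilde y(e) \ge r(P) = T - \sum_{e\in P}\tau(e)$, that $\theta_{e^\ast} = t + \sum_P \tau < T \le t + \sum_P \tilde y(e) \cdot$\ldots — roughly, the arrival time is forced below the total closing time), the flow must in fact be \emph{inside} the cut window rather than strictly early. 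The delicate point is ruling out the jump from "late at the predecessor" to "early at $e$": passing from one element to the next changes $\theta$ by $\tau(\text{pred})$ and changes $\alpha$ by at most $\tau(\text{pred}) + \tilde y(\text{pred})$ along the path realizing $\alpha$ at $e$ — but $\alpha(e)$ is a min over \emph{all} paths, so the realizing path $Q$ for $e$ need not pass through the predecessor of $e$ on $P$. This is exactly where the switching property must enter.

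\textbf{Where the switching property does the work.} To handle the case that the minimizing path $Q$ for $\alpha(e)$ diverges from $P$ before reaching $e$, I would apply the switching axiom to $P$ and $Q$ at $e$: the path $Q \times_e P$ lies in $[Q,e] \cup (e,P]$, and the path $P \times_e Q$ lies in $[P,e] \cup (e,Q]$. Using supermodularity of the weight $r$ — or rather the corresponding relation for $\tilde y$ via dual feasibility on these two switched paths — one bounds $\sum_{[P,e]}(\tau+\tilde y) + \sum_{[Q,e]}(\tau+\tilde y)$ from above by $2\alpha(e)$ plus a controlled error, forcing $\alpha(e)$ to be close to $\sum_{p\in(P,e]}(\tau(p)+\tilde y(p))$ whenever the slack is small; this is precisely the structural fact promised in \autoref{sec:structure} (and foreshadowed by \autoref{ex:abstract-network}, which warns that the naive "one path from $a$ to $d$" reasoning fails, so the switched-path estimate is genuinely needed rather than a direct concatenation).

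\textbf{Anticipated main obstacle.} The hardest part is this last step: in a classical network one would simply concatenate an initial segment of one path with a terminal segment of another, but \autoref{ex:switch_violated} shows the time expansion is not itself an abstract network, and \autoref{ex:abstract-network} shows even the static switched path can behave unexpectedly. So I expect the real work — deferred by the authors to \autoref{sec:structure} — to be establishing the right "monotonicity of $\alpha$ along any path" lemma, of the form $\alpha(e') \le \alpha(e) + \tau(e) + \tilde y(e)$ whenever $e <_P e'$ are consecutive on some path $P$, proved by a switching argument combined with supermodularity of $r$; once that lemma is in hand, the contradiction in the first paragraph closes cleanly by induction along $P_t$, since the flow cannot jump over the half-open window $[\alpha(e), \alpha(e)+\tilde y(e))$ from one element to the next.
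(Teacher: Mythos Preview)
Your opening moves match the paper's: both arguments fix an uncovered path, locate the $<_P$-maximal element $\bar e$ at which the flow is still ``not early'' (i.e.\ $\theta_{\bar e}\ge\alpha(\bar e)$), observe that uncoveredness forces $\theta_{\bar e}\ge\alpha(\bar e)+\tilde y(\bar e)$, and use dual feasibility to rule out $\bar e$ being the last element. The divergence comes at the next step, and there your plan has a real gap.

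\textbf{The monotonicity lemma you propose need not hold.} You want $\alpha(e')\le\alpha(e)+\tau(e)+\tilde y(e)$ for consecutive $e<_Pe'$, arguing via the switched path $Q\times_e P$ where $Q$ realises $\alpha(e)$. But in an abstract network $Q\times_e P$ is only guaranteed to be \emph{contained} in $[Q,e]\cup[e,P]$; it may skip $e'$ entirely (and even skip $e$). When that happens there is no short path through $e'$ to witness the bound, and your ``cannot jump the window'' induction stalls. The supermodularity of $r$ does not rescue this: supermodularity controls sums over \emph{whole} switched paths, not prefix sums up to a prescribed element that may be absent from the switched path. So the lemma you identify as ``the real work'' is not what \autoref{sec:structure} supplies, and it is not clear it is true at all.

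\textbf{What the paper actually does.} The structural results in \autoref{sec:structure} are different: after passing to the relaxation with intermediate waiting (which you do not mention, but which is essential), one may assume no path is a strict subset of another, and hence that $\times_e$ preserves the internal orders of $P$ and $Q$ (\autoref{cor:order-preserving}). The proof of \autoref{lem:cut-over-time} then runs a \emph{minimal-counterexample} argument rather than an induction along a single path: among all uncovered temporal paths (with waiting) choose $P_\sigma$ minimising $\sum_{e\in P}(\tau(e)+\tilde y(e))$; take $Q$ realising $\alpha(\bar e)$ and set $R:=Q\times_{\bar e}P$; then $\sum_{e\in R}(\tau+\tilde y)$ is \emph{strictly} smaller, and one checks---using the order-preservation from \autoref{cor:order-preserving}, not any $\alpha$-monotonicity---that a suitable temporal copy $R_{\sigma'}$ is again uncovered (its $[Q,\bar e]$-part arrives after each element has left the cut, its $(\bar e,P]$-part arrives before each element enters). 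This contradicts minimality. The point is that the argument never needs to control $\alpha$ at the specific successor $e'$; it only needs every element of $R$ in the $(\bar e,P]$-part to lie strictly after $\bar e$ on $P$, which order-preservation guarantees.
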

The proof of \autoref{lem:cut-over-time} involves some additional results on the structure of abstract networks, which we will elaborate on in the following sections. Using LP duality, \autoref{lem:cut-over-time} immediately leads to the following corollary, which implies \autoref{thm:max-flow/min-cut}.
\newcounter{count-cor:max-flow/min-cut}
\addtocounter{count-cor:max-flow/min-cut}{\arabic{theorem}}
\begin{corollary}
The temporally repeated abstract flow $x^\ast$ is a maximum abstract flow over time, and $C$ is a minimum abstract cut over time whose capacity is equal to the flow value.
\end{corollary}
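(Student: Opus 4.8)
The plan is to argue by contradiction: suppose some temporal path $P_t \in \pathsT$, with $P = (e_1, \dots, e_k)$ listed in the order $<_P$, is disjoint from $C$. Along $P_t$ the flow enters $e_j$ at time $\theta_j := t + \sum_{p <_P e_j} \tau(p)$, so by the definition of $C$ every $e_j$ is reached either \emph{too early}, meaning $\theta_j < \alpha(e_j)$, or \emph{too late}, meaning $\theta_j \geq \alpha(e_j) + \tilde{y}(e_j)$; since $\alpha(e_j) \leq \alpha(e_j) + \tilde{y}(e_j)$, exactly one of the two occurs at each element. First I would pin down the two ends of $P$. Taking $P$ itself in the minimum defining $\alpha(e_1)$ shows $\alpha(e_1) = 0$, so $\theta_1 = t \geq 0$ excludes ``too early'' and $e_1$ is reached too late. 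At the other end I claim $e_k$ is reached too early: otherwise $\theta_k \geq \alpha(e_k) + \tilde{y}(e_k)$, and choosing a path $Q \ni e_k$ attaining $\alpha(e_k)$ and applying the switching property to $Q$ and $P$ at $e_k$ yields a path $R := Q \times_{e_k} P$; since $e_k$ is $<_P$-maximal, the ``$P$-part'' $\{p \in P : p \geq_P e_k\}$ of the switching set is just $\{e_k\}$, so $R \subseteq \{q \in Q : q \leq_Q e_k\}$ and hence $\sum_{e \in R}(\tau(e)+\tilde{y}(e)) \leq \alpha(e_k) + \tau(e_k) + \tilde{y}(e_k)$. On the other hand, dual feasibility of $\tilde{y}$ for the weight $r(R) = T - \sum_{e \in R}\tau(e)$ forces $\sum_{e \in R}(\tau(e)+\tilde{y}(e)) \geq T$, so $\alpha(e_k) + \tilde{y}(e_k) + \tau(e_k) \geq T$, and then the arrival time $\theta_k + \tau(e_k)$ of $P_t$ is at least $T$, contradicting $P_t \in \pathsT$.

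Scanning along $P$ we therefore pass from ``too late'' at $e_1$ to ``too early'' at $e_k$, so there is a consecutive pair $e_{i-1}, e_i$ -- take $e_i$ to be the first element of $P$ reached too early, so $i \geq 2$ -- with $e_{i-1}$ reached too late and $e_i$ reached too early. Since the flow enters consecutive elements $\tau(e_{i-1})$ apart, i.e.\ $\theta_i = \theta_{i-1} + \tau(e_{i-1})$, the two conditions combine to $\alpha(e_i) > \theta_i = \theta_{i-1} + \tau(e_{i-1}) \geq \alpha(e_{i-1}) + \tilde{y}(e_{i-1}) + \tau(e_{i-1})$. The crux of the proof is to rule this out, that is, to show that the time $\alpha(e)$ at which an element enters the cut cannot increase, from one element of a path to the next, by more than $\tau(e) + \tilde{y}(e)$. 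In a classical network this is just the triangle inequality for the shortest-path labels underlying $\alpha$; in an abstract network it is genuinely delicate, because -- as the example at the end of the introduction shows -- switching two abstract paths into a third one need not yield the naively expected concatenation, so the ``short prefix'' witnessing $\alpha(e_{i-1})$ cannot simply be rerouted through $e_i$.

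This monotonicity-type statement about $\alpha$ is exactly what I expect the structural results of \autoref{sec:structure} to provide, and I expect the device of letting flow wait at intermediate elements to be essential there: allowing waiting both loosens the rigid timing relation $\theta_i = \theta_{i-1} + \tau(e_{i-1})$ to an inequality and restores enough switching-like structure on the time-expanded system that a splicing/uncrossing argument can be carried out, producing a path through $e_i$ whose initial segment before $e_i$ has total $(\tau + \tilde{y})$-length at most $\alpha(e_{i-1}) + \tau(e_{i-1}) + \tilde{y}(e_{i-1})$. Granting that lemma, the chain of inequalities above becomes self-contradictory, so no temporal path avoids $C$ and $C$ is a feasible abstract cut over time. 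The main obstacle is thus isolated entirely in that one structural lemma about $\alpha$; the two boundary cases and the scanning argument above are routine bookkeeping.
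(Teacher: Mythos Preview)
You have conflated two different statements. The Corollary in the paper is deduced in one line from \autoref{lem:cut-over-time} (feasibility of $C$) together with LP duality: since $\tilde{x}^\ast$ and $\tilde{y}$ are optimal primal/dual solutions for the static weighted problem, $\sum_{(e,\theta)\in C} c(e)=\sum_{e} c(e)\tilde y(e)=\sum_{P} r(P)\tilde x^\ast(P)=\sum_{P_t} x^\ast(P_t)$, and \autoref{lem:cut-is-ub} then forces both to be optimal. What you have written is not a proof of the Corollary but an attempted proof of \autoref{lem:cut-over-time}; the duality computation, which is the entire content of the Corollary, does not appear at all.

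Even as an outline for \autoref{lem:cut-over-time}, your plan has a real gap at the step you yourself flag as ``the crux''. You want a triangle-inequality lemma $\alpha(e_i)\le \alpha(e_{i-1})+\tau(e_{i-1})+\tilde y(e_{i-1})$ for consecutive elements on a path, and you expect \autoref{sec:structure} to supply it. It does not: the results there are about \emph{order preservation} of the switching operation (Assumption~\ref{ass:cross}, \autoref{lem:no-inclusion}, \autoref{cor:order-preserving}), not about the values $\alpha$. Your proposed splicing argument fails for the reason you already noticed: if $Q$ attains $\alpha(e_{i-1})$, the switched path $Q\times_{e_{i-1}}P$ need not contain $e_i$, so it gives no upper bound on $\alpha(e_i)$. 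The paper circumvents this entirely. It does \emph{not} prove monotonicity of $\alpha$; instead it argues by minimal counterexample: among uncovered temporal paths (with waiting) choose $P_\sigma$ minimizing $\sum_{e\in P}(\tau(e)+\tilde y(e))$, locate $\bar e$ and its successor $e'$ essentially as you do, deduce $\alpha(\bar e)<\sum_{p\in(P,\bar e)}(\tau(p)+\tilde y(p))$, take $Q$ attaining $\alpha(\bar e)$, and form $R=Q\times_{\bar e}P$. The point is not that $R$ certifies a small $\alpha(e')$, but that $R$ is strictly shorter in $\tau+\tilde y$ than $P$, and the order-preservation results of \autoref{sec:structure} are used only to verify that a suitable temporal copy $R_{\sigma'}$ is again uncovered, contradicting minimality. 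So the role of waiting and of \autoref{sec:structure} is to make this descent argument go through, not to yield the global inequality on $\alpha$ that your plan requires.
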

\begin{proof}
We observe that by duality,
$$\sum_{(e, \theta) \in C} c(e) = \sum_{e \in E} c(e)\tilde{y}(e) = \sum_{P \in \paths} r(P)\tilde{x}^\ast(P) = \sum_{P_t \in \pathsT} x^\ast(P_t)$$
and thus the capacity of $C$ equals the flow value of $x^\ast$.\qed
\end{proof}

\section{Waiting at intermediate elements and the structure of abstract networks}\label{sec:structure}

In order to prove that the set $C$ constructed in the preceeding section actually covers all temporal paths, we need to show that we can ensure w.l.o.g.~that the switching operation $\times_\cdot$ preserves the order of the intersecting paths. We start by showing a weaker version of this statement, asserting that we can always choose the path resulting from an application of $\times_\cdot$ in such a way that the two subpaths used for its construction are not mixed.

\begin{lemma}
\label{lem:cross}
Let $P, Q \in \paths$, $e \in P \cap Q$, then there is a path $R \subseteq [P, e] \cup [e, Q]$ such that $a \in R \cap [P, e]$ and $b \in R \setminus [P, e]$ implies $a <_R b$.
\end{lemma}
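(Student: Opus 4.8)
The plan is to produce $R$ as an extremal element of the family of all paths contained in $[P,e]\cup[e,Q]$, where ``extremal'' refers to a \emph{purely set-theoretic} potential; this last point is essential, since the switching property controls only the element set of a newly produced path, not its internal order. Write $[P,a]:=\{p\in P:p\le_P a\}$ and $[a,R]:=\{r\in R:r\ge_R a\}$ for the evident prefixes and suffixes, so that the switching property applied to paths $X,Y\in\paths$ and $f\in X\cap Y$ reads $X\times_f Y\subseteq [X,f]\cup[f,Y]$. First I would observe that the family $\mathcal{R}:=\{R\in\paths: R\subseteq[P,e]\cup[e,Q]\}$ is nonempty, as it contains $P\times_e Q$ by the switching property. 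Among all $R\in\mathcal{R}$, choose one minimizing the number $|R\setminus[P,e]|$ of elements of $R$ that fall outside the prefix of $P$ ending at $e$.

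I claim this $R$ already satisfies the lemma. If $R\cap[P,e]=\emptyset$ the asserted implication is vacuous, so assume $R\cap[P,e]\neq\emptyset$ and, towards a contradiction, suppose the implication fails. Set $a:=\max_{<_R}(R\cap[P,e])$; failure of the implication yields a ``bad'' element $b\in R\setminus[P,e]$ with $b<_R a$ (if $b<_R a'$ for some $a'\in R\cap[P,e]$, then $b<_R a$ as well, since $a'\le_R a$). Since $a\in R\cap P$, the switching property produces a path $R':=P\times_a R\subseteq[P,a]\cup[a,R]$. Because $a\le_P e$ we have $[P,a]\subseteq[P,e]$, and $[a,R]\subseteq R\subseteq[P,e]\cup[e,Q]$; hence $R'\in\mathcal{R}$, so $R'$ is an admissible competitor.

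It remains to check that $|R'\setminus[P,e]|<|R\setminus[P,e]|$, contradicting the choice of $R$. On the one hand $[P,a]\setminus[P,e]=\emptyset$, and by maximality of $a$ no element of $R$ lying strictly after $a$ in $<_R$ belongs to $[P,e]$, so $[a,R]\setminus[P,e]=\{r\in R:r>_R a\}$; consequently $R'\setminus[P,e]\subseteq\{r\in R:r>_R a\}$. On the other hand $\{r\in R:r>_R a\}\subseteq R\setminus[P,e]$, and this inclusion is \emph{strict} because the bad element $b$ lies in $R\setminus[P,e]$ yet satisfies $b<_R a$. Therefore $|R'\setminus[P,e]|\le|\{r\in R:r>_R a\}|<|R\setminus[P,e]|$, the desired contradiction. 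The only delicate point is the one flagged at the start: one has no handle on the order $<_{R'}$ of the switched path, which is precisely why the potential $|R\setminus[P,e]|$ must be defined without reference to any ordering and why a naive ``just use $P\times_e Q$'' argument does not work; everything else is bookkeeping with the switching inclusion.
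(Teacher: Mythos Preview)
Your argument is correct and follows the same approach as the paper: choose $R\subseteq[P,e]\cup[e,Q]$ minimizing $|R\setminus[P,e]|$, assume a bad pair $(a,b)$ exists, and derive a contradiction by switching to $R':=P\times_a R$, which lies in $\mathcal{R}$ but satisfies $|R'\setminus[P,e]|<|R\setminus[P,e]|$ since $b\notin[P,a]\cup[a,R]$. Your extra step of taking $a$ to be the $<_R$-maximum of $R\cap[P,e]$ is harmless but unnecessary---the paper simply takes any offending pair $(a,b)$ and observes that $b$ is excluded from $R'$, which already gives the strict inclusion $R'\setminus[P,e]\subsetneq R\setminus[P,e]$.
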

\begin{proof}
Let $P, Q \in \paths$ and $e \in P \cap Q$. Let $R$ to be a path contained in $[P, e] \cup [e, Q]$ such that $|R \setminus [P, e]|$ is minimal. By contradiction assume there is $a \in R \cap [P, e]$ and $b \in R \setminus [P, e]$ with $b <_R a$. Let $R' := P \times_a R$. Observe that $R' \subset [P, e] \cup [e, Q]$ and $R' \setminus [P, e] \subset R \setminus [P, e]$ as $a \notin R'$, contradicting the choice of $R$.\qed
\end{proof}
As a result of \autoref{lem:cross}, the following assumption is without loss of generality.
\begin{assumption}\label{ass:cross}
If $a \in P \times_e Q \cap [P, e]$ and $b \in P \times_e Q \setminus [P, e]$, then $a <_{P \times_e Q} b$.
\end{assumption}

In order to show that $\times_\cdot$ actually preserves the internal order of $P$ and $Q$, we will -- temporally -- extend our model of time expansion by allowing flow to deliberately delay its traversal at intermediate elements.

\subsubsection{Waiting at intermediate elements}
A \emph{temporal path with intermediate waiting} is denoted by $P_\sigma$, where $P \in \paths$ is a path of the underlying static abstract network and $\sigma : P \rightarrow \T$ specifies the waiting time $\sigma(e)$ before traversing element $e \in P$. Flow sent along $P_\sigma$ enters element $e$ at time
$\arr(P_\sigma, e) := \sum_{p \in (P, e)} (\sigma(p) + \tau(p)) + \sigma(e)$
which is the time it needs for traversing all preceeding elements and the time it spends waiting at those elements and at $e$ itself. Accordingly, we identify $P_\sigma$ with the set of its temporal elemtents by defining 
$$P_\sigma := \left\{(e, \theta) \in E_T :~ e \in P,~ \theta = \arr(P_\sigma, e) \right\}.$$
The set of all temporal paths with intermediate waiting is denoted by
$$\textstyle \pathsTw := \left\{P_\sigma :~ P \in \paths,\;\sigma \in \T^P,~ \sum_{e \in P} (\sigma(e) + \tau(e)) < T \right\}.$$

We will identify $P_t \in \pathsT$ with $P_{(t, 0, \dots, 0)} \in \pathsTw$. Note that the maximum abstract flow over time problem with waiting at intermediate elements is a relaxation of the maximum abstract flow over time without waiting, and the temporally repeated abstract flow $x^\ast$ defined in \autoref{sec:temp-rep} is a feasible solution to this relaxation. We will show that $C$ actually covers all paths in $\pathsTw$, and thus $x^\ast$ is optimal even if waiting is allowed. This implies that the relaxation is not proper, i.e., the possibility of waiting does not have any effect on the value of the optimal solution. 

However, the extension of the model allows us to delete certain paths from the network. Observe that if $Q$ is a strict subset of $P$, and $<_Q$ is identical to the restriction of $<_P$ to $Q$, then there always is an optimal abstract flow over time that does not use any copy of $P$ (since it can wait at intermediate elements and use $Q$ instead). Thus we can safely erase $P$ from the base network in this case (without violating the switching axiom as $Q$ can always replace $P$ as switching choice). Hence, if we allow waiting at intermediate elements, the following assumption is without loss of generality.

\begin{assumption}
\label{rem:order}
If $Q \subset P$ then there are $a, b \in Q$ with $a <_P b$ and $b <_Q a$. 
\end{assumption}

In the remainder of this section, we will show that \autoref{rem:order} implies the following lemma. As a corollary, we can assume w.l.o.g. the switching operation to preserve order.

\begin{lemma}\label{lem:no-inclusion}
There are no paths $P, Q \in \paths$ such that $Q \subset P$.
\end{lemma}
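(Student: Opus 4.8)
The plan is to argue by contradiction: suppose there exist $P, Q \in \paths$ with $Q \subsetneq P$. By \autoref{rem:order} we may pick $a, b \in Q$ with $a <_P b$ but $b <_Q a$; among all such "inversions" in the pair $(P,Q)$, I would choose $a$ and $b$ so that the interval of $Q$ between $b$ and $a$ (in the $<_Q$ order) is as small as possible, or alternatively so that $a$ and $b$ are consecutive in $<_Q$. The goal is then to apply the switching operation to $P$ and $Q$ at a suitably chosen common element and derive a path that is again strictly contained in $P$ but is "closer to" being order-consistent with $P$, contradicting minimality — or else directly produces a path violating \autoref{rem:order} or \autoref{ass:cross}.

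**Using the switching property together with Assumption~\ref{ass:cross}.** The key move is to switch $P$ and $Q$ at the element $b$: consider $R := P \times_b Q \subseteq [P,b] \cup [b,Q]$. Since $a <_P b$ we have $a \in [P,b]$, so $a$ may be inherited from the $P$-side; since $b <_Q a$, the element $a$ also lies in $(b, Q]$, so a priori $R$ could pick up $a$ from the $Q$-side instead. By \autoref{ass:cross}, whichever copy is used, $R$ is "unmixed": all elements of $R \cap [P,b]$ precede all elements of $R \setminus [P,b]$. Now $R \subseteq [P,b] \cup [b,Q] \subseteq [P,b] \cup [b,Q]$, and since $Q \subsetneq P$ and $<_Q$ agrees with $<_P$ nowhere near this inversion in a useful way, one shows $R \subseteq P$ as a set. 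The crucial claim is that $R$ is in fact a \emph{strict} subset of $P$ (e.g.\ because some element of $P$ strictly between the $P$-part and where the $Q$-tail rejoins is skipped), so \autoref{rem:order} applies again to $(P, R)$, yielding a new inversion; and the symmetric switch $Q \times_b P$, combined with how $\arr$ and the waiting model were set up, forces this new inversion to be strictly smaller than the one we started with. That contradicts the minimal choice.

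**Where the real difficulty lies.** The main obstacle is controlling the set $R$ produced by the switch: the switching axiom only guarantees $R \subseteq [P,b]\cup[b,Q]$, not that $R$ is an initial-plus-final concatenation of pieces of $P$, so a single switch need not obviously decrease the complexity measure. I expect the proof to need the swapped switch $Q \times_b P$ as well (to handle the $[Q,b]$ side), and possibly a second application of \autoref{lem:cross}/\autoref{ass:cross} to clean up the result, together with an argument that $R \ne P$ — this last point is where strictness $Q \subsetneq P$ must be genuinely used, since if $Q = P$ everything is trivially consistent and no contradiction arises. One must also be careful that $R$ might equal $Q$ itself (harmless, just repeat), and that the chosen inversion's minimality measure is the right one (interval length in $<_Q$, say) so that "unmixing" via \autoref{ass:cross} provably reduces it.

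**The corollary.** Once \autoref{lem:no-inclusion} is established, combining it with \autoref{ass:cross} gives that $P \times_e Q$ is never a strict subset whose order disagrees, and in fact one checks the unmixed path from \autoref{lem:cross} must then traverse all of $[P,e]$ followed by all of $[e,Q]$ in order — i.e.\ the switching operation can be taken to preserve the internal orders of $P$ and $Q$. I would state this as the promised corollary and note it is exactly the property needed in \autoref{sec:final-proof} to show $\arr$-consistency of the cut $C$.
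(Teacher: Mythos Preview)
Your sketch has the right outline—contradiction, apply the switching operator, find a smaller counterexample—but two concrete pieces are missing, and without them the argument does not close.

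\textbf{The switch direction and strictness.} With your choice $R := P \times_b Q \subseteq [P,b]\cup[b,Q]$, both $a$ and $b$ lie in \emph{each} of the two pieces (since $a <_P b$ and $b <_Q a$), so nothing forces $R \subsetneq P$; you flag this yourself as ``the main obstacle'' and never resolve it. The paper fixes this by switching in the \emph{other} direction: with $b$ the last element of $Q$ up to which $<_Q$ agrees with $<_P$, and $a$ its $<_Q$-successor, one has $a >_Q b$ and $a <_P b$, hence $a \notin [Q,b]\cup[b,P]$, so $R := Q \times_b P$ automatically omits $a \in P$ and is strictly contained in $P$.

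\textbf{The extremal choice.} Your proposed measure (``smallest inversion interval in $<_Q$'') is not what makes the induction go through, and your claim that the new inversion is strictly smaller is unsupported—the appeal to $\arr$ and the waiting model is a red herring, as the lemma is purely combinatorial. The paper uses a two-level extremal choice: first pick $P^\ast$ of \emph{minimum cardinality} among all $P$ admitting some $Q \subsetneq P$; then, among all $Q \subsetneq P^\ast$, pick $Q^\ast$ so that the first-inversion point $b(Q^\ast)$ is \emph{maximal in $<_{P^\ast}$}. The minimality of $|P^\ast|$ is used to show that $<_R$ agrees with $<_{Q^\ast}$ (hence with $<_{P^\ast}$) on $R \cap (Q^\ast, b^\ast)$: otherwise a further switch would produce a path strictly inside $Q^\ast \subsetneq P^\ast$, contradicting $|P^\ast|$ minimal. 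It follows that the first inversion of $R$ occurs strictly after $b^\ast$ in $<_{P^\ast}$, contradicting the maximality of $b^\ast$. Your proposal has neither of these extremality layers, and that is the genuine gap.
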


\newcounter{count-cor:order-preserving}
\addtocounter{count-cor:order-preserving}{\arabic{theorem}}
\begin{corollary}\label{cor:order-preserving}
Let $R := P \times_e Q$.
If $a, b \in R \cap [P, e]$ and $a <_P b$ then $a <_R b$.
If $a, b \in R \setminus [P, e]$ and $a <_Q b$ then $a <_R b$.
\end{corollary}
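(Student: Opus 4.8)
The plan is to dispose of the two assertions separately and by the same idea: assume the claimed ordering is violated, switch $R$ back against $P$ (respectively $Q$) at a carefully chosen element, and use \autoref{ass:cross} to argue that the path produced is a \emph{proper} subset of $P$ (respectively $Q$), which contradicts \autoref{lem:no-inclusion}. For readability I will write $[X,f] := \{x \in X : x \le_X f\}$ and $[f,X] := \{x \in X : x \ge_X f\}$ for a path $X$ and $f \in X$, so that the switching property reads $X \times_f Y \subseteq [X,f] \cup [f,Y]$ and \autoref{ass:cross} says that in $R = P \times_e Q$ every element of $R \cap [P,e]$ is $<_R$-before every element of $R \setminus [P,e]$.

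For the first assertion I would suppose, for contradiction, that $a,b \in R \cap [P,e]$ with $a <_P b$ but $b <_R a$, and look at the path $R' := R \times_b P$, which is well defined since $b \in R \cap P$. First, $a \notin R'$, because $a$ lies neither in $[R,b]$ (as $b <_R a$) nor in $[b,P]$ (as $a <_P b$). Second, $R' \subseteq P$: the inclusion $[b,P] \subseteq P$ is trivial, and $[R,b] \subseteq P$ holds because, $b$ being an element of $R \cap [P,e]$, \autoref{ass:cross} puts every element of $R \setminus [P,e]$ strictly $<_R$-after $b$, so $[R,b] \subseteq R \cap [P,e] \subseteq P$. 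Since $a \in R \cap [P,e] \subseteq P$, this gives $R' \subsetneq P$, contradicting \autoref{lem:no-inclusion}; hence $a <_R b$.

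The second assertion is the exact mirror image: supposing $a,b \in R \setminus [P,e]$ with $a <_Q b$ but $b <_R a$, I would consider $R'' := Q \times_a R$, which is well defined because $R \setminus [P,e] \subseteq [e,Q] \subseteq Q$. Then $b \notin R''$ (it is neither in $[Q,a]$, since $a <_Q b$, nor in $[a,R]$, since $b <_R a$), while $R'' \subseteq Q$ because $a \in R \setminus [P,e]$ together with \autoref{ass:cross} forces every element of $R \cap [P,e]$ to precede $a$ in $<_R$, so $[a,R] \subseteq R \setminus [P,e] \subseteq [e,Q] \subseteq Q$, and $[Q,a] \subseteq Q$ trivially. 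As $b \in R \setminus [P,e] \subseteq Q$, this yields $R'' \subsetneq Q$, again contradicting \autoref{lem:no-inclusion}.

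The step I expect to require the most care is the containment $[R,b] \subseteq P$ (respectively $[a,R] \subseteq Q$): this is the one place where \autoref{ass:cross} is genuinely needed, guaranteeing that the ``$Q$-part'' $R \setminus [P,e]$ of $R$ sits entirely behind $b$ in $<_R$, so that truncating $R$ at $b$ discards precisely the elements that might fail to belong to $P$. Everything else reduces to unwinding the definitions of the truncations and of the switching operation; in particular, no induction or minimal-counterexample argument is needed, since each assertion is settled by a single switch.
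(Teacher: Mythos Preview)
Your proof is correct and follows essentially the same approach as the paper: for each assertion you assume the order is reversed, perform the same switch ($R \times_b P$ in the first case, $Q \times_a R$ in the second), invoke \autoref{ass:cross} to show the relevant initial/final segment of $R$ lies in $P$ (resp.\ $Q$), and derive a proper inclusion contradicting \autoref{lem:no-inclusion}. Your write-up is in fact slightly more explicit than the paper's in justifying the key containments $[R,b] \subseteq P$ and $[a,R] \subseteq Q$.
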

\begin{proof}\ 
\begin{itemize}
\item By contradiction assume $a, b \in [P, e] \cap R$ and $a <_P b$ but $b <_R a$. Then, by \autoref{ass:cross}, there is no $c \in R \setminus (P, e)$ with $c <_R a$. Thus $[R, a] \subseteq P$ and $R \times_b P \subset P$, contradicting \autoref{lem:no-inclusion}.
\item By contradiction assume $a, b \in R \setminus [P, e]$ and $a <_Q b$ but $b <_R a$. Then, by \autoref{ass:cross}, there is no $c \in R \cap [P, e]$ with $c >_R b$. Thus $[a, R] \subseteq Q$ and $Q \times_a R \subset Q$, contradicting \autoref{lem:no-inclusion}.\qed
\end{itemize}
\end{proof}

\paragraph{Proof of \autoref{lem:no-inclusion}.}

By contradiction assume there are $P, Q \in \mathcal{P}$ with $Q \subset P$. Let $P^*$ be such that $|P^*|$ is minimal among all possible choices of such a $P$.

For $Q \subset P^*$ define $b(Q) \in Q$ to be the maximal element w.r.t.~$<_Q$ such that $p <_{P^*} b(Q)$ for all $p \in (Q, b(Q))$, i.e., until element $b(Q)$ the order of $Q$ is identical to that of $P$. By \autoref{rem:order}, $b(Q)$ cannot be the last element of $Q$. So let $a(Q) \in Q$ be the successor of $b(Q)$ in $Q$. Note that this implies $a <_{P*} b$ by definition of $b(Q)$.
Among all paths $Q \subset P^*$, choose $Q^*$ such that $b^* := b(Q^*)$ is maximal w.r.t.~$<_{P^*}$. Let $a^* := a(Q^*)$. 

Let $R := Q^* \times_{b^*} P^*$. Note that $a^* \notin R$, as $a^* >_{Q^*} b^*$, and therefore $R \subset P^*$. We now claim that $<_R$ is identical to $<_{Q^*}$ on the $(Q^*, b^*)$-part of $R$.

\begin{claim}
For all $c, d \in R \cap (Q^*, b^*)$ with $c <_{Q^*} d$, we have $c <_R d$.
\end{claim}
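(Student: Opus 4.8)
The claim concerns $R := Q^* \times_{b^*} P^*$, which by the switching property satisfies $R \subseteq [Q^*, b^*] \cup [b^*, P^*]$, and which we have already observed to satisfy $R \subset P^*$ (since $a^* \notin R$). We want to show that on the initial segment $(Q^*, b^*)$ of $R$, the order $<_R$ coincides with $<_{Q^*}$. The natural strategy is proof by contradiction: suppose there are $c, d \in R \cap (Q^*, b^*)$ with $c <_{Q^*} d$ but $d <_R c$. The plan is to apply the switching operation once more to $R$ at the element $d$, producing a path $R' := R \times_d Q^*$ (or, symmetrically, $Q^* \times_d R$, depending on which end we want to preserve), and then argue that $R'$ is again a strict subset of $P^*$, but one for which the associated quantity $b(R')$ is strictly larger with respect to $<_{P^*}$ than $b^* = b(Q^*)$ — contradicting the maximality in the choice of $Q^*$.

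**Key steps in order.**
First I would fix the contradictory pair $c, d$ and form $R' := R \times_d Q^*$, so that $R' \subseteq [R, d] \cup [d, Q^*]$. Second, I would verify $R' \subseteq P^*$: the segment $[d, Q^*]$ lies in $Q^* \subset P^*$, and the segment $[R, d]$ lies in $R \subseteq [Q^*,b^*] \cup [b^*,P^*] \subseteq P^*$, so $R' \subseteq P^*$; strictness should follow because some element is dropped (here the role of \autoref{ass:cross} applied to $R = Q^* \times_{b^*} P^*$ will be important, to control which elements of $R$ precede $d$). Third — the crucial step — I would show that $b(R') >_{P^*} b^*$. The point is that by construction $R'$ agrees with $Q^*$ from $d$ onward, and $d$ itself is an element of $(Q^*, b^*)$, hence $d$ and everything after it up to $b^*$ is ordered in $R'$ exactly as in $Q^*$, which is ordered exactly as in $P^*$ on $(Q^*, b^*)$; so the "agrees-with-$P^*$" prefix of $R'$ extends at least through $b^*$, giving $b(R') \geq_{P^*} b^*$, and one then pushes to show it is strictly larger. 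Finally, this contradicts the maximality of $b^* = b(Q^*)$ among $Q \subset P^*$, completing the proof of the claim.

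**The main obstacle.**
The delicate part is the third step: carefully establishing that $b(R') >_{P^*} b^*$ rather than merely $\geq$, and simultaneously that $R'$ is a \emph{strict} subset of $P^*$ so that it is a legitimate competitor in the selection defining $Q^*$. This requires a precise analysis of where $d$ sits inside $R$ and inside $Q^*$, using \autoref{ass:cross} to know that in $R = Q^* \times_{b^*} P^*$ the $[Q^*,b^*]$-elements all precede the $[b^*,P^*] \setminus [Q^*,b^*]$-elements, and using $c <_{Q^*} d$ together with $d <_R c$ to locate the "inversion" and show that cutting at $d$ strictly lengthens the good prefix. One has to be careful that the element realizing the strict increase of $b(\cdot)$ is genuinely present in $R'$ and genuinely ordered correctly; handling the boundary cases (e.g., what if $d$ is the immediate $<_{Q^*}$-successor of $c$, or what if $[R,d]$ happens to coincide with a prefix of $Q^*$) is where the bookkeeping is easiest to get wrong. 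I also expect \autoref{cor:order-preserving} cannot be invoked here since its proof depends on \autoref{lem:no-inclusion}, which is exactly what we are in the middle of proving — so the argument must stay self-contained and rely only on \autoref{ass:cross}, \autoref{rem:order}, and the switching property itself.
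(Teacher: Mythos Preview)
Your construction $R' := R \times_d Q^*$ is exactly right, and you correctly use \autoref{ass:cross} (applied to $R = Q^* \times_{b^*} P^*$) to see that every element of $[R,d]$ lies in $[Q^*,b^*]\subseteq Q^*$. But you then aim at the wrong contradiction. You try to show $b(R') >_{P^*} b^*$, and for this you assert that ``$R'$ agrees with $Q^*$ from $d$ onward''. That statement is about the \emph{order} of $R'$ on $R'\cap[d,Q^*]$, and nothing you have available gives it to you: \autoref{ass:cross} only tells you that the $[R,d]$-part of $R'$ precedes the rest, not that the internal order on either part matches $<_{Q^*}$. This is precisely the content of \autoref{cor:order-preserving}, which---as you yourself note---cannot be invoked here. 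So your ``crucial step'' is circular: establishing $b(R')>_{P^*} b^*$ is just as hard as the claim you are trying to prove.

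The paper avoids this entirely by contradicting the \emph{other} extremal choice in the setup, namely the minimality of $|P^*|$. From $[R,d]\subseteq Q^*$ and $[d,Q^*]\subseteq Q^*$ one gets $R'\subseteq Q^*$; since $c\notin [R,d]$ (because $d<_R c$) and $c\notin[d,Q^*]$ (because $c<_{Q^*} d$), we have $c\notin R'$ and hence $R'\subsetneq Q^*$. Now $(R',Q^*)$ is a pair of paths with $R'\subset Q^*$ and $|Q^*|<|P^*|$, contradicting the choice of $P^*$. No order information about $R'$ is needed at all---only a set containment---which is why the paper's argument is a single line. The maximality of $b^*$ is used later, for $R$ itself, once the claim is in hand.
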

\begin{proof}
If $c <_{Q^*} d$ but $d <_{R} c$, let $R' := R \times_d Q^*$. Note that $c \notin R'$ and by \autoref{ass:cross}, we have chosen $R$ such that $[R, d] \subset Q^*$. Thus $R' \subset Q^* \subset P^*$ which contradicts the choice of $P^*$.\qed
\end{proof}

By definition of $b(Q^*)$, the order $<_{Q^*}$ is identical to $<_{P^*}$ on $(Q^*, b^*)$ and thus $<_R$ is identical to $<_{P^*}$ on the $(Q^*, b^*)$-part of $R$. This implies that $a(R), b(R)$ cannot be both in the $(Q^*, b^*)$-part of $R$. Thus, $a(R) \in [b^*, P^*]$, which by $a(R) <_{P^*} b(R)$ implies that $b(R) \in (b^*, P^*)$. However this means $b(R) >_{P^*} b^*$ contradicting our choice of $Q^*$ maximizing $b^*$.\qed

\section{Proof of \autoref{lem:cut-over-time}}\label{sec:final-proof}
We will show that $C$ not only covers all paths in $\pathsT$ but even those paths that use waiting at intermediate elements, implying optimality of the constructed temporally repeated abstract flow for the relaxation of the problem. We are thus allowed to use the results from \autoref{sec:structure} in the proof, which is only sketched here (a complete proof can be found in the appendix).
\vspace{-0.5cm}
\subsubsection{\autoref{lem:cut-over-time}a.}
\emph{$C \cap P_\sigma \neq \emptyset$ for every $P_\sigma \in \pathsTw$.}

\begin{proof}
By contradiction assume there is a path that is not covered by $C$. Among all uncovered paths choose $P_\sigma \in \pathsTw$ such that $\sum_{e \in P} (\tau(e) + \tilde{y}(e))$ is minimal. We will show that there is an uncovered path $R$ whose length is strictly shorter, yielding a contradiction.

Let $\bar{e} \in P$ be maximal w.r.t.~$<_P$ among all element on $P$ with $\arr(P_\sigma, \bar{e}) \geq \alpha(\bar{e})$. Note that such an element exists because the first element $e_0$ of $P$ fulfills $\arr(P_\sigma, e_0) = \sigma(e_0) \geq 0 = \alpha(e_0)$. By construction, $P_\sigma$ arrives at $\bar{e}$ after the element has entered the cut. Note that, as $P_\sigma$ is not covered by the cut, the path must actually arrive at $\bar{e}$ after it has left the cut again, i.e., $\arr(P_\sigma, \bar{e}) \geq \alpha(\bar{e}) + \tilde{y}(\bar{e})$. Adding $\tau(\bar{e})$ to both sides of the inequality yields
\begin{equation}
\label{eqn:after_cut}
\sum_{e \in [P, \bar{e}]} (\sigma(e) + \tau(e)) ~\geq~ \alpha(\bar{e}) + \tilde{y}(\bar{e}) + \tau(\bar{e}). 
\end{equation}

We will now argue that $\bar{e}$ cannot be the final element of $P$. Assume by contradiction this was the case. Then let $L$ be a path with $\alpha(\bar{e}) = \sum_{e \in (L, \bar{e})} (\tau(e) + \tilde{y}(e))$. As $L \times_{\bar{e}} P \subseteq [L, \bar{e}]$, feasibility of $\tilde{y}$ implies $\alpha(\bar{e}) + \tau(\bar{e}) + \tilde{y}(\bar{e}) = \sum_{e \in [L, \bar{e}]} (\tau(e) + \tilde{y}(e)) \geq T$. Combining this with \eqref{eqn:after_cut} yields $\sum_{e \in P} (\sigma(e) + \tau(e)) \geq T$, a contradiction to $P_\sigma \in \pathsT$.

Thus, $\bar{e}$ is not the final element of $P$ and so we let $e'$ be the successor of $\bar{e}$ on $P$. 
Observe that by choice of $\bar{e}$ and definition of $\alpha$
\begin{equation*}
\arr(P_\sigma, e')
~<~ \alpha(e')
~\leq~ \sum_{e \in [P, \bar{e}]} (\tau(e) + \tilde{y}(e)).
\end{equation*}

Note that the left hand side of \eqref{eqn:after_cut} is at most $\arr(P_\sigma, e')$ and thus combining both inequalities yields
\begin{equation}
 \alpha(\bar{e}) < \sum_{e \in (P, \bar{e})} (\tau(e) + \tilde{y}(e)).
\end{equation}

Now let $Q \in \mathcal{P}$ be a path with $\sum_{e \in (Q, \bar{e})} (\tau(e) + \tilde{y}(e)) = \alpha(\bar{e})$. We consider the path $R := Q \times_{\bar{e}} P$. Observe that 
$$\sum_{e \in R} (\tau(e) + \tilde{y}(e)) \leq \alpha(\bar{e}) + \sum_{e \in (\bar{e}, P)} (\tau(e) + \tilde{y}(e)) < \sum_{e \in P} (\tau(e) + \tilde{y}(e)).$$

Let $s := \sum_{e \in [Q, \bar{e}]} \tilde{y}(e) + \sum_{e \in [Q, \bar{e}] \setminus R} \tau(e)$ and $\sigma' := (s, 0, \ldots, 0) \in \T^R$. We will show that the time expanded path $R_{\sigma'}$ is not covered by $C$, which contradicts the choice of $P$ as uncovered path minimizing the length w.r.t.~$\tau + \tilde{y}$.

Let $f \in R$. We show $(f, \arr(R, f)) \notin C$. Note that $\arr(R, f) = s + \sum_{e \in (R, f)} \tau(e)$, and further that by our results from \autoref{sec:structure} we can know that $R$ consists of two parts: The first part containing elements form $[Q, \bar{e}]$ in the same order as $<_Q$, the second part containing elements form $(\bar{e}, P)$, in the same order as $<_P$.
\begin{itemize}
\item If $f \in [Q, \bar{e}]$,
$$\arr(R, f) ~\geq~ \sum_{e \in (Q, f)} (\tau(e) + \tilde{y}(e)) + \tilde{y}(f) ~\geq~ \alpha(f) + \tilde{y}(f).$$
So $R_{\sigma'}$ reaches $f$ after it has left the cut in this case.
\item If $f \in R \setminus [Q, \bar{e}]$,
\begin{eqnarray*}
\arr(R, f) & = & \sum_{e \in [Q, \bar{e}]} (\tau(e) + \tilde{y}(e)) + \sum_{\mathrlap{e \in (\bar{e}, R) \cap (R, f)}} \tau(e) \ \leq \  \alpha(\bar{e}) + \tau(\bar{e}) + \tilde{y}(\bar{e}) + \sum_{\mathrlap{e \in (\bar{e}, P) \cap (P, f)}}\; \tau(e)\\
&\leq & \arr(P, \bar{e}) + \sum_{\mathrlap{e \in [\bar{e}, P] \cap (P, f)}} (\tau(e) + \sigma(e)) \ \leq \ \sum_{e \in (P, f)} (\tau(e) + \sigma(e)) + \sigma(f).
\end{eqnarray*}
The last term is strictly less than $\alpha(f)$ due to the choice of $\bar{e}$ and the fact that $f >_P \bar{e}$. So $R_{\sigma'}$ reaches $f$ before it enters the cut in this case.
\end{itemize}
This concludes the proof.\qed
\end{proof}

\section{Conclusion}
We presented abstract flows over time, an extension of flows over time that can be viewed as a first approach towards more general dynamic packing IPs. Our main result shows that the max flow/min cut result of Ford and Fulkerson still is valid in Hoffman's setting of abstract flows, emphasizing the robustness of the concept. At their heart, our proofs relied exclusively on the switching axiom for abstract networks, showing how this abstraction actually captures the essence of total dual integrality in network-based packing problems.

\bibliographystyle{splncs03}
\bibliography{abstract_flows_over_time}

\begin{thebibliography}{10}
\providecommand{\url}[1]{\texttt{#1}}
\providecommand{\urlprefix}{URL }

\bibitem{fleischer2003}
Fleischer, L., Skutella, M.: Minimum cost flows over time without intermediate
  storage. In: Proceedings of the fourteenth annual {ACM-SIAM} symposium on
  Discrete algorithms. pp. 66--75 (2003)

\bibitem{fleischer1998}
Fleischer, L., Tardos, E.: Efficient continuous-time dynamic network flow
  algorithms. Operations Research Letters  23(3-5),  71--80 (1998)

\bibitem{ford54}
Ford, L., Fulkerson, D.: Maximal flow through a network  (1954)

\bibitem{ford62}
Ford, L., Fulkerson, D.: Flows in networks. Princeton University Press (1962)

\bibitem{gaillard97}
Gaillard, A.: Switchdec polyhedra. Discrete applied mathematics  76(1),
  141--163 (1997)

\bibitem{gross2012}
Gro\ss, M., Skutella, M.: Maximum multicommodity flows over time without
  intermediate storage. In: Algorithms - {ESA} 2012. pp. 539--550 (2012)

\bibitem{hall2007}
Hall, A., Hippler, S., Skutella, M.: Multicommodity flows over time: Efficient
  algorithms and complexity. Theoretical Computer Science  379(3),  387--404
  (2007)

\bibitem{hoffman74}
Hoffman, A.: A generalization of max flow-min cut. Mathematical Programming
  6(1),  352--359 (1974)

\bibitem{hoffman1978}
Hoffman, A., Schwartz, D.: On lattice polyhedra. In: Proceedings 5th Hungarian
  Coll. on Combinatorics, North Holland. pp. 593--598 (1978)

\bibitem{hoppe2000}
Hoppe, B., Tardos, E.: The quickest transshipment problem. Mathematics of
  Operations Research  25(1),  36--62 (2000)

\bibitem{klinz2004}
Klinz, B., Woeginger, G.: Minimum-cost dynamic flows: The series-parallel case.
  Networks  43(3),  153--162 (2004)

\bibitem{martens08}
Martens, M., McCormick, S.T.: A polynomial algorithm for weighted abstract
  flow. Integer Programming and Combinatorial Optimization pp. 97--111 (2008)

\bibitem{mccormick96}
McCormick, S.T.: A polynomial algorithm for abstract maximum flow. In:
  Proceedings of the seventh annual ACM-SIAM symposium on Discrete algorithms.
  pp. 490--497. Society for Industrial and Applied Mathematics (1996)

\bibitem{schrijver1984}
Schrijver, A.: Total dual integrality from directed graphs, crossing families
  and sub-and supermodular functions. Progress in combinatorial optimization
  pp. 315--361 (1984)

\bibitem{skutella09}
Skutella, M.: An introduction to network flows over time. In: Cook, W.J.,
  Lov\'asz, L., Vygen, J. (eds.) Research Trends in Combinatorial Optimization,
  pp. 451--482. Springer (2009)

\end{thebibliography}

\end{document}